\DeclareMathOperator{\Exp}{\mathbb{E}}
\newtheorem{theorem}{Theorem}
\newtheorem{lemma}[theorem]{Lemma}
\newtheorem{remark}{Remark}
\begin{document}
%
\title{Performance Guarantees for Data Freshness in Resource-Constrained Adversarial IoT Systems}
%
%
%


\author{
    \IEEEauthorblockN{
	Aresh Dadlani\IEEEauthorrefmark{1}, 
	Muthukrishnan Senthil Kumar\IEEEauthorrefmark{2}, 
	Omid Ardakanian\IEEEauthorrefmark{3}, and
	Ioanis Nikolaidis\IEEEauthorrefmark{3},
    }
    \IEEEauthorblockA{
        \IEEEauthorrefmark{1}Department of Mathematics and Computing, Mount Royal University, Calgary, Canada\\
	    \IEEEauthorrefmark{2}Department of Applied Mathematics and Computational Sciences, PSG College of Technology, Coimbatore, India\\
	    \IEEEauthorrefmark{3}Department of Computing Science, University of Alberta, Edmonton, Canada\\
	Emails: adadlani@mtroyal.ca, msk.amcs@psgtech.ac.in, \{ardakanian,  nikolaidis\}@ualberta.ca
    }
}

%
%

\maketitle

\begin{abstract}
Timely updates are critical for real-time~monitoring and control applications powered by the Internet of Things (IoT). As these systems scale, they become increasingly vulnerable~to adversarial attacks, where malicious agents interfere with legitimate transmissions to reduce data rates, thereby inflating~the age of information~(AoI). Existing adversarial AoI models often assume stationary channels and overlook queueing dynamics arising from compromised sensing sources operating under~resource constraints. Motivated by~the G-queue framework, this paper investigates a two-source M/G/1/1 system in which one source is adversarial and disrupts the update process by injecting~negative arrivals according to a~Poisson process~and inducing i.i.d. service slowdowns, bounded in attack rate and duration. Using moment generating functions, we then derive closed-form expressions for average and peak AoI for an~arbitrary number of sources. Moreover, we introduce a~worst-case constrained attack model and employ stochastic dominance arguments to establish analytical AoI bounds. Numerical results validate the analysis and highlight the impact of resource-limited adversarial interference under general service time distributions.

\end{abstract}

\begin{IEEEkeywords}
Age of information, adversarial arrivals, general service time, M/G/1/1 queue, performance bounds
\end{IEEEkeywords}

%
\IEEEpeerreviewmaketitle

\section{Introduction}
\label{sec_1}
\fontdimen2\font=0.65ex
\IEEEPARstart{T}{he} convergence of Internet of Things (IoT) platforms with low-latency 5G connectivity has enabled a new class of real-time applications such as environmental monitoring, industrial automation, and remote healthcare~\cite{Chettri2020}. These systems~rely on distributed sensing devices to generate time-stamped status updates for analysis and control at a receiver.~In such time-sensitive~settings, the \textit{freshness} of received information is critical for maintaining system~responsiveness, control stability, and quality of experience. Traditional performance metrics,~such as throughput and delay, fail to capture this notion of timeliness from~the perspective of~the monitor. To~address this requirement, the \textit{age of information}~(AoI) metric has been introduced to quantify the time elapsed since the most recently received update was generated~\cite{AbdElmagid2019, Yates2021, Kahraman2024}.

Minimizing AoI in IoT systems is an active research area that requires a nuanced understanding of the dynamics that govern status delivery. To model the inherent delays caused by physical separation, channel contention, and access protocols, the communication link is often represented as a queueing system. Early works studied how contention and scheduling affect AoI~in single-source settings under various service disciplines and access strategies~\cite{PappasBook2023}, and subsequent studies extended these~models to multi-source~\cite{Moltafet2020, Moltafet2022, Moradian2024} and multi-hop~\cite{Chiariotti2022, Sinha2024, Kumar2025} networks with heterogeneous update generation, buffer constraints, and packet management mechanisms.

A growing challenge in large-scale IoT deployments is~vulnerability to malicious attacks that exploit protocol properties and flaws to inflate information freshness. Such disruptions are particularly detrimental in real-time~applications,~where delayed or outdated updates can~compromise safety and control performance. Much of the existing literature~focuses on optimizing AoI under stationary wireless channel conditions with additive~noise, fading, and unintentional interference~\cite{Banerjee2022, Sinha2022}. However, the stationarity assumption is unsuitable for~adversarial settings, where attackers can dynamically disrupt~transmissions or degrade link quality. In~\cite{Banerjee2024}, game-theoretic strategies are developed to characterize equilibrium behavior in adversarially jammed status update systems. While all these~efforts (\cite{Banerjee2022, Sinha2022, Banerjee2024}) provide valuable insights, they~do~not explicitly examine adversarial interference from~a queueing-theoretic perspective. In practice, adversaries are often resource constrained and may be limited by transmission power or spectrum access. For example, an attacker using directional~interference to disrupt a communication channel may be unable to consistently track a moving receiver and therefore can~only interfere sporadically. The impact of such constrained adversarial (negative) arrivals, modeled within the G-queue framework \cite{Gelenbe1991}, on AoI remains largely unexplored. 
A related study modeled negative Poisson arrivals in an M/M/1/1 system that remove~all queued~updates~\cite{Doncel2025}. Nonetheless, the analysis does not extend to non-Markovian service time distributions or service slowdowns~under physical constraints limiting sustained attacks.
\begin{figure*}[t]
    \centering
  \subfloat[\label{fig1a}]{%
       \includegraphics[width=0.94\columnwidth]{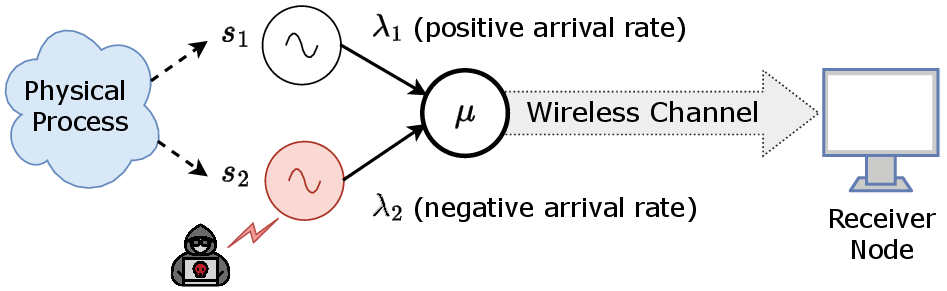}}
    ~\qquad~
  \subfloat[\label{fig1b}]{%
        \includegraphics[width=0.81\columnwidth]{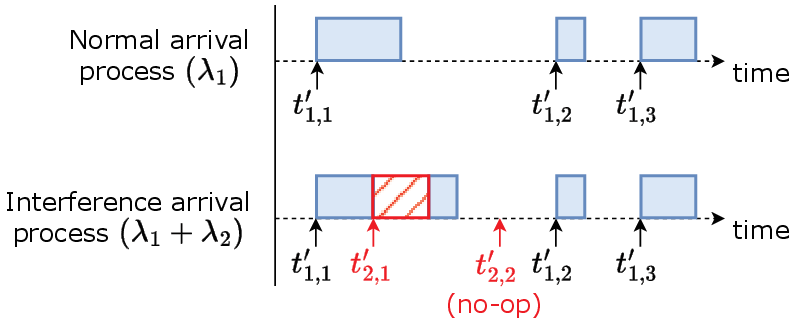}} 
    \vspace{0.2em}
  \caption{Schematic of a real-time monitoring system under attack: (a) Two IoT sensing devices with one compromised~$(s_2)$. (b) Timeline showing the impact of negative arrivals on the service times of status updates. Idle injections ($t'_{2,2}$) have no effect (no-op), while collisions ($t'_{2,1}$) preempt updates, causing delayed service and increased AoI.}
  \label{fig_1} 
\end{figure*}

This paper analyzes AoI under adversarial interference, modeled as negative arrivals, in a two-source M/G/1/1 system. We first derive closed-form moment generating functions (MGFs) for average AoI (AAoI) and peak AoI (PAoI) under adversarial disruptions consistent with real-world IoT attacker constraints. We then extend the MGF-based approach to generalize the analysis for multi-source M/G/1/1 queues subject to similar adversarial conditions and derive lower and upper bounds on AAoI for the worst-case adversarial model. We assume the attacker cannot continuously interfere with~the system due to limited battery, spectrum access restrictions,~or the risk of detection. This constraint ensures a physically plausible attack model~and allows for meaningful AoI performance guarantees. Finally, we present numerical~results to validate~the theoretical findings and demonstrate~the impact of adversarial interference under general service time distributions and independent and identically distributed (i.i.d.) service slowdowns.

\section{System Model and Definitions}
\label{sec_2}
\fontdimen2\font=0.65ex
We consider the real-time IoT monitoring system illustrated in \figurename{~\ref{fig1a}}, where a sensor observes an underlying physical process and generates time-stamped status updates. An adversarial process, acting through a compromised sensor, injects attack traffic into the communication channel. The system~is modeled as an M/G/1/1 queue, where updates are served in order of arrival, and at most one update can be accommodated at any time. The service time distribution models the time required for successful reception of an update, and the departure instant from the queue corresponds to its reception at the destination. The benign source transmits informative (positive) status updates according to a Poisson process with rate \(\lambda_1 \!>\! 0\), while the compromised source generates disruptive (negative) arrivals following an independent Poisson process with rate~\(\lambda_2 \!>\! 0\). Negative arrivals, though carrying no useful information, disturb the channel by preempting any positive packet in service. Such disruptions can arise from actions like jamming, reducing the channel bitrate, or triggering retransmissions, effectively prolonging the update reception time. This setup captures attacks within~the G-queue framework, where negative arrivals disrupt the queueing discipline and degrade system~performance. To ensure physical plausibility, we bound the adversary’s attack rate by $\lambda_2 \!\leq\! \lambda_{\max}$. This bounded-threat assumption enables meaningful AoI performance guarantees under realistic adversarial conditions while still allowing for worst-case strategies within the limit.

An illustrative example of this process is depicted in~\figurename{~\ref{fig1b}}. Let $t_{i,j}$ denote the generation time of the $j$-th update from source~$s_i$ and $t'_{i,j}$ its reception time at the receiver. Under normal operation, updates generated by source $s_1$ arrive at time instants $t'_{1,j}$, where $j \!\in\! \{1,2,3\}$. For the negative arrivals at $t'_{2,j'}$, where $j' \!\in\! \{1,2\}$, idle injections $(t'_{2,2})$ have no effect (no-op), whereas in-service preemptions $(t'_{2,1})$ force retransmissions or re-encoding. These interruptions elongate effective service times by inducing service slowdowns for subsequent update transmissions, increase inter-delivery intervals, and inflate AoI. The normal service time of an update from source~$s_1$ is denoted by the continuous random variable (r.v.) $S_{\mathrm{n}}$,~with probability density function (pdf) $f_{S_{\mathrm{n}}}(\cdot)$, cumulative distribution function (cdf) $F_{S_n}(t) \!=\! \int_0^t f_{S_n}(u) du \!=\! \Pr(S_{\mathrm{n}} \leq t)$, and moment generating function (MGF) $M_{S_{\mathrm{n}}}(s) \!=\! \Exp[e^{sS_{\mathrm{n}}}]$. Unlike~\cite{Doncel2025}, following a negative arrival, the service time is increased~according to the continuous r.v. $S_{\mathrm{s}} \!=\! \beta S_{\mathrm{n}}$, characterized by pdf $f_{S_{\mathrm{s}}}(\cdot)$ and MGF $M_{S_{\mathrm{s}}}(s) \!=\! \Exp[e^{sS_{\mathrm{s}}}]$. After an exponentially distributed sojourn in the slow state with mean $1/\Lambda$, where $\Lambda \!\triangleq\! \lambda_1 \!+\! \lambda_2$, the server resumes normal service. The slowdown factor, $1 \!<\! \beta \!\leq\! \beta_{\max}$, models the operational limits on the adversary and ensures that the server cannot be forced into a persistently degraded state.

AoI is the time elapsed since the generation of the most recently received update at the destination, making it a fundamental measure of information freshness. At any given time~$\tau$, the most recently received update is indexed~by $N_{i}(\tau) \!\triangleq\! \max \left\{ j' \mid t_{i,j'}' \le \tau \right\}$~\cite{Moltafet2020}.
Typically exhibiting a sawtooth pattern, the instantaneous AoI of $s_i$ at time~$t$ is defined by the random process $A_i(t) \!\triangleq\! t - \psi_i(t)$, where $\psi_i(\tau) \!=\! t_{i, N_i(\tau)}$~is the generation time of the latest received update from $s_i$.~Consequently, the AAoI of source $s_i$ over interval $(0,\mathcal{T})$ is:
\begin{equation}
    \bar{\Delta}_i = \frac{1}{\mathcal{T}} \int_0^{\mathcal{T}} A_i(t) dt\, .
    \label{eq_1}
\end{equation}

A more mathematically tractable metric is the PAoI, which captures the maximum AoI value immediately before the~reception of a new update, and is given as $P_{i,j} = t_{i,j}' - t_{i,j-1}$~\cite{PappasBook2023}. PAoI essentially quantifies~the~maximum staleness experienced between consecutive successfully delivered updates.

\section{AoI and PAoI in Adversarial M/G/1/1 Queues}
\label{sec_3}
\fontdimen2\font=0.65ex
In this section, we analyze the age metrics in an M/G/1/1 queue subject to adversarial arrivals. To facilitate our analysis, we relate AoI with two key random variables associated with the $j$-th successfully delivered update from source $s_i$. The \textit{system time}~$(T_{i,j})$ is the total time from the generation of an update until its reception at the destination and is calculated as $T_{i,j} = t_{i,j}' \!-\! t_{i,j}$. 
The \textit{inter-departure time}~$(Y_{i,j})$ is the time between the reception of the $(j-1)$-th and the $j$-th delivered updates, and is given as $Y_{i,j} = t_{i,j}' - t_{i,j-1}'$. Therefore, the PAoI of the $j$-th update can be expressed as:
\begin{equation}
    P_{i,j} = Y_{i,j} + T_{i,j-1}.
    \label{eq_4}
\end{equation} 

Assuming the system is stationary and the AoI process is ergodic, we drop the subscript $j$ in derivations that follow. Therefore, for source~$s_i$, we denote the system and inter-departure times by $T_i$ and $Y_i$, respectively, with MGFs $M_{T_i}(s) \triangleq \mathbb{E}[e^{sT_i}]$ and $M_{Y_i}(s) \triangleq \mathbb{E}[e^{sY_i}]$.



\subsection{Age in the Two-Source Adversarial Model}
\label{sec_3_1}
\fontdimen2\font=0.65ex
We consider the two-source case introduced in Section~\ref{sec_2}, where sources~$s_1$ and $s_2$ generate~positive and negative packets, respectively. Let $A_i$ and $P_i$ denote, respectively, the AoI and~PAoI~of updates from source~$s_i$, where $i \!\in\! \{1,2\}$, with MGFs $M_{A_i}(s) \triangleq \mathbb{E}\!\left[ e^{s A_i} \right]$ and $M_{P_i}(s) \triangleq \mathbb{E}\!\left[ e^{s P_i} \right]$.~Following the general M/G/1/1 AoI characterization in~\cite{Moltafet2022},~the MGFs of AoI and PAoI for the benign source~$s_1$ are given as:
\begin{align}
    M_{A_1}(s) &= \frac{M_{T_1}(s)\,\big(M_{Y_1}(s) - 1\big)}{s\, M_{Y_1}'(0)}\, , 
    \label{eq_5} \\
    M_{P_1}(s) &= M_{T_1}(s)\, M_{Y_1}(s)\, ,
    \label{eq_6}
\end{align}
where $M_{Y_1}'(0)$ denotes the first derivative of $M_{Y_1}(s)$ at $s = 0$. To obtain closed-form expressions for~\eqref{eq_5} and \eqref{eq_6}, we require explicit derivations of $M_{T_1}(s)$ and $M_{Y_1}(s)$. 
\vspace{0.2em}

\subsubsection{Derivation of $M_{T_1}(s)$}
\label{sec_3_1_1}
\fontdimen2\font=0.65ex
With $T_1$ representing the system time of a \emph{successfully delivered} update from source~$s_1$, let~$\mathcal{D}$ be the event that a positive update completes service before any negative arrival from source~$s_2$. Therefore, for~an infinitesimal $\epsilon>0$, the distribution of $T_1$ satisfies~\cite{Najm2018}:
\begin{align}
    \Pr\!\big(T_{1\!} \!\in\! [t, t + \epsilon)\big)\! 
    &=\! \Pr\!\big(S_{\mathrm{n}\!} \!\in\! [t, t + \epsilon) \mid \mathcal{D}\big) \nonumber \\
    &=\! \frac{\Pr\!\big(S_{\mathrm{n}\!} \!\in\! [t, t + \epsilon)\big) \Pr\!\big(\mathcal{D} \mid S_{\mathrm{n}\!} \!\in\! [t, t + \epsilon)\big)}{\Pr(\mathcal{D})}.
    \label{eq_7}
\end{align}
Taking the limit of \eqref{eq_7} yields the pdf of $T_1$ as follows:
\begin{equation}
    f_{T_1}(t) =\lim_{\epsilon \to 0} \frac {\Pr\! \big(T_{1\!} \in [t, t + \epsilon)\big)}{\epsilon} =  \frac{f_{S_{\mathrm{n}}\!}(t)  \Pr(\mathcal{D} \mid S_{\mathrm{n}\!} > t)}{\Pr(\mathcal{D})}.
    \label{eq_8}
\end{equation}
Since the event $\mathcal{D}$ occurs only if no negative arrivals occur during the service period of length $t$, we have:
\begin{align}
    &\Pr(\mathcal{D} \mid S_{\mathrm{n}\!} > t) = e^{-\lambda_2 t}\, ,
    \label{eq_9} \\
    \text{and}\quad &\Pr(\mathcal{D}) = \int_0^{\infty} e^{-\lambda_2 t} f_{S_{\mathrm{n}\!}}(t)\, dt = M_{S_{\mathrm{n}}}(-\lambda_2) \, .
    \label{eq_10}
\end{align}
Therefore, plugging \eqref{eq_9} and \eqref{eq_10} into \eqref{eq_8} results in $f_{T_1}(t) = f_{S_{\mathrm{n}}}(t) \, e^{-\lambda_2 t}/M_{S_{\mathrm{n}}}(-\lambda_2)$, which directly leads to the following MGF:
\begin{equation}
    M_{T_1}(s) = \frac{M_{S_{\mathrm{n}}}(s - \lambda_2)}{M_{S_{\mathrm{n}}}(-\lambda_2)}\, .
    \label{eq_12}
\end{equation}
\vspace{0.1em}

\subsubsection{Derivation of $M_{Y_1}(s)$}
\label{sec_3_1_2}
\fontdimen2\font=0.65ex
Departing from the preemptive M/G/1/1 analysis in~\cite{Moltafet2022}, our derivation of $M_{Y_1}(s)$ accounts for the adversarial impact of negative arrivals on the service time distribution. We model $Y_1$, the inter-departure time between two consecutive \emph{delivered} updates from source~$s_1$,~using~the semi-Markov chain shown in Fig.~\ref{fig_2}. Here, $q_0$, $q_1$, and $q_2$ denote the possible states of the system. Specifically, $q_0$~represents the idle state in which the system awaits the arrival of a fresh positive update, $q_1$ corresponds to the~state where~a positive update is being served under normal conditions, and $q_2$ denotes the state in which, following a preemption event by a negative arrival, the positive update is served at a reduced (slow) service rate. Upon successful delivery of an $s_1$ update, the system regenerates in state~$q_0$. These state~transitions are formalized as follows:
\begin{figure}[!t]
    \centering
    \includegraphics[width=0.62\columnwidth]{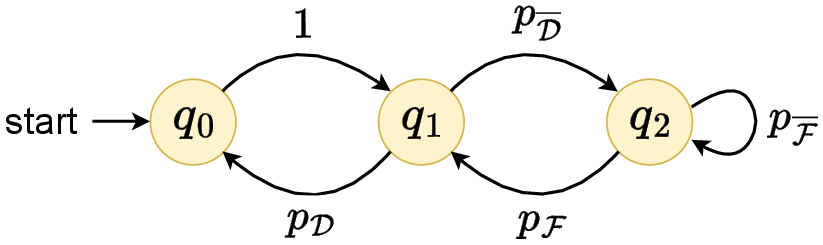}
    \vspace{-0.4em}
    \caption{Semi-Markov chain for the M/G/1/1 inter-departure time $Y_1$.}
    \label{fig_2}
\end{figure}
\begin{itemize}
    \item \textbf{$q_0 \!\rightarrow\! q_1$:} In state~$q_0$, the system remains idle and awaits the arrival of a positive update. Upon such an~arrival,~it deterministically transitions to $q_1$ with probability~$1$, where the update immediately begins normal service. The sojourn time in~$q_0$, denoted by~$W_1$, follows an exponential distribution with rate~$\lambda_1$, i.e., $W_1 \sim \mathrm{Exp}(\lambda_1)$.
    
    \item \textbf{$q_1 \!\rightarrow\! q_0$:} In state~$q_1$, the system serves the positive~update under normal conditions. If the service completes without preemption, the state transitions back to~$q_0$ with~probability $p_{\mathcal{D}} \triangleq \Pr(\mathcal{D}) = M_{S_{\mathrm{n}}}(-\lambda_2)$. The sojourn time in~$q_1$ prior to~this transition, denoted $W_2$, has the same distribution as the system time $T_1$ conditioned on successful delivery, i.e., $\Pr(W_2 > t)= \Pr(S_{\mathrm{n}\!} > t \mid \mathcal{D})$.


    \item \textbf{$q_1 \!\rightarrow\! q_2$:} If there is a negative arrival while a positive update is in normal service, the service is preempted,~and the system transitions from~$q_1$ to~$q_2$ with probability $p_{\overline{\mathcal{D}}} \triangleq 1 - \Pr(\mathcal{D}) = 1 \!-\! M_{S_{\mathrm{n}}}(-\lambda_2)$. The sojourn time of the system in~$q_1$ preceding this transition, denoted by~$W_3$, is distributed as $T_1$ conditioned on negative~preemption, i.e., $\Pr(W_3 > t) = \Pr(S_{\mathrm{n}} > t \mid \overline{\mathcal{D}})$.

    \item \textbf{$q_2 \!\rightarrow\! q_1$:} If the slowdown period ends before the~update in~$q_2$ completes service, the system switches back to~$q_1$, resuming normal service. Let $\mathcal{F}$ denote the event that the service completes before a negative arrival from source~$s_2$, and $\overline{\mathcal{F}}$ its complement. The probability of completion is given as $p_{\mathcal{F}} \!\triangleq\! \Pr(\mathcal{F}) = M_{S_{\mathrm{s}}}(-\Lambda)$. The sojourn time in~$q_2$ before this transition, denoted $W_4$, is given as $\Pr(W_4 > t) = \Pr(S_{\mathrm{s}\!} > t \mid \mathcal{F})$.

    \item \textbf{$q_2 \!\rightarrow\! q_2$:} When in state~$q_2$, the system serves a positive update at the slow rate. This transition from state $q_2$ to itself corresponds to event $\overline{\mathcal{F}}$, i.e., a~new negative arrival before service completion, which causes the slow service to be preempted and restarted with probability $p_{\overline{\mathcal{F}}} \!\triangleq\! 1 - \Pr(\mathcal{F}) \!=\! 1 \!-\! M_{S_{\mathrm{s}}}(-\Lambda)$.~The~sojourn time in $q_2$ prior to this transition, denoted as $W_5$, is determined similar to $W_3$ (i.e., the $q_1 \rightarrow q_2$ transition),~and is detailed in Lemma~\ref{lemma_1} below.     
\end{itemize}

\begin{lemma}
\label{lemma_1}
The pdf of the random variable~$W_3$ is given by:
\begin{equation}
    f_{W_3}(t) = \frac{\big(1 - F_{S_{\mathrm{n}}}(t)\big) \lambda_2 e^{-\lambda_2 t}}{1 - M_{S_{\mathrm{n}}}(-\lambda_2)}\, .
    \label{eq_13}
\end{equation}
\end{lemma}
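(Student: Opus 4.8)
The plan is to identify $W_3$ explicitly as a conditioned hitting time of the negative-arrival stream and then read off its density. Recall that $W_3$ is the amount of normal service rendered to a positive update up to the instant it is preempted, conditioned on the event $\overline{\mathcal{D}}$ that some negative arrival interrupts the update before its (normal) service would have completed. Measuring time from the start of service, let $X$ denote the epoch of the first negative arrival. Since source~$s_2$ injects negative arrivals as a Poisson process of rate~$\lambda_2$, independent of the service requirement~$S_{\mathrm{n}}$, we have $X \sim \mathrm{Exp}(\lambda_2)$ with $X$ independent of $S_{\mathrm{n}}$, and $\overline{\mathcal{D}} = \{X < S_{\mathrm{n}}\}$; on this event the update is preempted exactly at time $X$, so $W_3$ is distributed as $X$ conditioned on $\{X < S_{\mathrm{n}}\}$.

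Next I would write the unnormalized (sub-)density of the preemption epoch on $\overline{\mathcal{D}}$. For infinitesimal $\epsilon > 0$, independence together with the memorylessness of the Poisson stream gives
\begin{equation*}
    \Pr\!\big(X \in [t, t+\epsilon),\, X < S_{\mathrm{n}}\big) = \lambda_2 e^{-\lambda_2 t}\big(1 - F_{S_{\mathrm{n}}}(t)\big)\,\epsilon + o(\epsilon),
\end{equation*}
since $\Pr(S_{\mathrm{n}} > t) = 1 - F_{S_{\mathrm{n}}}(t)$. Dividing by $\Pr(\overline{\mathcal{D}})$ and letting $\epsilon \to 0$ then yields
\begin{equation*}
    f_{W_3}(t) = \frac{\lambda_2 e^{-\lambda_2 t}\big(1 - F_{S_{\mathrm{n}}}(t)\big)}{\Pr(\overline{\mathcal{D}})}.
\end{equation*}

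It remains only to evaluate the normalizing constant. Since $\mathcal{D}$ and $\overline{\mathcal{D}}$ partition the sample space, $\Pr(\overline{\mathcal{D}}) = 1 - \Pr(\mathcal{D})$, and \eqref{eq_10} already gives $\Pr(\mathcal{D}) = M_{S_{\mathrm{n}}}(-\lambda_2)$; hence $\Pr(\overline{\mathcal{D}}) = 1 - M_{S_{\mathrm{n}}}(-\lambda_2)$. As an internal consistency check, integrating the sub-density directly by parts with $u = 1 - F_{S_{\mathrm{n}}}(t)$ and $dv = \lambda_2 e^{-\lambda_2 t}\,dt$ recovers $\int_0^{\infty} \lambda_2 e^{-\lambda_2 t}\big(1 - F_{S_{\mathrm{n}}}(t)\big)\,dt = 1 - M_{S_{\mathrm{n}}}(-\lambda_2)$, in agreement with \eqref{eq_10}. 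Substituting this value then gives exactly \eqref{eq_13}.

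I expect the only genuine obstacle to be the modeling step in the first paragraph: one must recognize that the $q_1 \to q_2$ sojourn is clocked by the negative-arrival process rather than by the service requirement, that the correct conditioning event is $\{X < S_{\mathrm{n}}\}$, and that it is precisely the independence of the Poisson stream from $S_{\mathrm{n}}$ (plus memorylessness) that legitimizes the product form $\lambda_2 e^{-\lambda_2 t}\big(1 - F_{S_{\mathrm{n}}}(t)\big)$. Everything after that collapses to the one-line normalization already supplied by \eqref{eq_10}.
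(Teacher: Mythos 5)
Your proposal is correct and follows essentially the same route as the paper: both identify $W_3$ as the first negative-arrival epoch conditioned on preemption occurring before normal service completion, obtain the numerator $\lambda_2 e^{-\lambda_2 t}\big(1-F_{S_{\mathrm{n}}}(t)\big)$ from the independence of the Poisson stream and $S_{\mathrm{n}}$, and normalize by $\Pr(\overline{\mathcal{D}}) = 1 - M_{S_{\mathrm{n}}}(-\lambda_2)$. The only cosmetic difference is that you write the joint sub-density directly while the paper factors the same quantity via Bayes' rule, and your integration-by-parts consistency check is a harmless addition.
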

\begin{proof}
    The random variable $W_3$ is the sojourn time in state $q_1$ prior to a preemption by a negative arrival, and thus has the same distribution as the system time $T_1$ under preemption. Let $H_2 \sim \mathrm{Exp}(\lambda_2)$ denote the inter-arrival time of adversarial packets, and let $\overline{\mathcal{D}}$ denote the event of a negative arrival before normal service completes. Conditioning on $\overline{\mathcal{D}}$, we get:
    \begin{align*}
    f_{W_3}(t) 
    &= \lim_{\epsilon \to 0} \frac{\Pr\! \big(H_2 \!\in\! [t , t+\epsilon) \mid \overline{\mathcal{D}} \big)}{\epsilon} \\
    &= \lim_{\epsilon \to 0} \frac{\Pr\! \big(H_2 \!\in\! [t , t+\epsilon) \big)\, \Pr\!\big(\overline{\mathcal{D}} \mid H_2 \!\in\! [t, t+\epsilon)\big)}{\Pr(\overline{\mathcal{D}})\, \epsilon} \\
    &= \frac{\Pr(S_{\mathrm{n}\!} > t) \, f_{H_2}(t)}{\Pr(\overline{\mathcal{D}})}  
    = \frac{\big(1 - F_{S_{\mathrm{n}\!}}(t)\big) \lambda_2 e^{-\lambda_2 t}}{1 - M_{S_{\mathrm{n}\!}}(-\lambda_2)} \, ,
\end{align*}
where $\Pr(\overline{\mathcal{D}}) \!=\! 1 - M_{S_{\mathrm{n}}}(-\lambda_2)$ follows from evaluating the Laplace-Stieltjes transform (LST) of $S_{\mathrm{n}}$ at~$\lambda_2$.
\end{proof}

The inter-departure time between two consecutively delivered updates from source~$s_1$ equals the total sojourn accumulated by the system as it regenerates from state $q_0$ back~to~$q_0$. This total sojourn time is the sum of the individual sojourns associated with each state and transitions along all paths of the form $\{q_0, \ldots, q_0\}$. Consequently, $Y_1$ can be expressed as:
\begin{equation}
    Y_1 = W_1 + W_2 + W_3 + m W_4 + W_5\, ,
    \label{eq_14}
\end{equation}
where $m \geq 0$ counts the number of occurrences of~$W_4$~during the cycle from~$q_0$ to~$q_0$. For example, one possible realization of this cycle is:
\begin{align*}
    \{q_0 \!\to\! q_1,\, q_1 \!\to\! q_2,\, q_2 \!\to\! q_2,\, q_2 \!\to\! q_2,\, q_2 \!\to\! q_1,\, q_1 \!\to\! q_0\}\, ,
\end{align*}
 which occurs with probability $p_{\mathcal{F}}\, p_{\mathcal{D}} (1-p_{\mathcal{D}})(1-p_{\mathcal{F}})^2$.~Following the methodology in~\cite{Moltafet2022}, the MGF of $Y_1$ in~\eqref{eq_14} is thus calculated as:
\begin{align}
    &M_{Y_1}(s) \notag \\
    &\,\,=\Exp\!\left[e^{s Y_1}\right] 
    = \Exp\!\left[e^{s(W_1+W_2+W_3+mW_4+W_5)}\right] \notag \\
    &\,\,=\! \sum_{m=0}^{\infty}\! p_{\mathcal{D}} p_{\mathcal{F}} (1 \!-\!p_{\mathcal{D}\!})(1 \!-\! p_{\mathcal{F}\!})^m 
     \mathbb{E}\!\left[e^{s(W_{1\!}+W_2+W_3+mW_4+W_5)\!}\right]\!, 
    \label{eq_15}
\end{align}
where the MGFs of the individual sojourn times are given by:
\begin{align}
    \mathbb{E}[e^{sW_1}] &= \frac{\lambda_1}{\lambda_1 - s}\, , \label{eq_16} \\
    \mathbb{E}[e^{sW_2}] &= \frac{M_{S_{\mathrm{n}}}(s - \lambda_2)}{M_{S_{\mathrm{n}}}(-\lambda_2)}\, , \label{eq_17} \\
    \mathbb{E}[e^{sW_3}] &= \frac{\lambda_2 \big(1 - M_{S_{\mathrm{n}}}(s - \lambda_2)\big)}{(s - \lambda_2)\big(M_{S_{\mathrm{n}}}(-\lambda_2) - 1\big)}\, , \label{eq_18} \\
    \mathbb{E}[e^{sW_4}] &= \frac{M_{S_{\mathrm{s}}}(s - \Lambda)}{M_{S_{\mathrm{s}}}(-\Lambda)}\, , \label{eq_19} \\
    \mathbb{E}[e^{sW_5}] &= \frac{\Lambda \big(1 - M_{S_{\mathrm{s}}}(s - \Lambda)\big)}{(s - \Lambda) \big(M_{S_{\mathrm{s}}}(-\Lambda) - 1\big)}\, .\label{eq_20} 
\end{align}

Finally, substituting~\eqref{eq_12} and~\eqref{eq_15} into~\eqref{eq_5} and~\eqref{eq_6} yields closed-form expressions for the MGFs of the AoI and PAoI for source~$s_1$.
\begin{remark}
\label{remark_1}
The $n^{th}$-order moment of the AoI (or PAoI) of source~$s_1$ can be obtained by evaluating the $n^{th}$ derivative of its MGF at $s=0$. In particular, the first derivative yields the average AoI,~$\bar{\Delta}_1$ (or average PAoI,~$\bar{\mathrm{P}}_1$):
\begin{gather}
    \bar{\Delta}_1 = \left.\frac{d M_{A_1}(s)}{ds}\right|_{s=0}, \quad
    \bar{\mathrm{P}}_1 = \left.\frac{d M_{P_1}(s)}{ds}\right|_{s=0}.
    \label{eq_21}
\end{gather}
\end{remark}
\vspace{0.1em}

\subsection{Extension to the Multi-Source Adversarial Model}
\label{sec_3_2}
\fontdimen2\font=0.65ex
The MGF-based framework developed in the preceding sub-section can be directly extended to the case of $N$ sources. Let source~$s_c$ generate adversarial packets at rate~$\lambda_c$, and $\lambda_i$ denote the arrival rate of positive updates from source~$s_i$ $(i \neq c)$. The aggregate arrival rate of all packets is $\Lambda \!=\! \sum_{k=1}^N \lambda_k$. By applying Lemma~\ref{lemma_1}, the pdf and MGF of the system time~$T_i$ for an update from source~$s_i$ are:
\begin{gather}
    f_{T_{i}}(t) = \frac{f_{S_{\mathrm{n}\!}}(t) \lambda_c e^{-\lambda_c t}}{ M_{S_{\mathrm{n}}}(-\lambda_c)}, \quad 
    M_{T_{i}}(s) = \frac{M_{S_{\mathrm{n}\!}}(s \!-\! \lambda_c)}{M_{S_{\mathrm{n}}}(-\lambda_c)}. \label{eq_22}
\end{gather}
Following the same reasoning as in the two-source case, the MGFs of the AoI and PAoI for source~$s_i$ are given as:
\begin{align}
    M_{A_i}(s) &= \frac{M_{S_{\mathrm{n}}}(s - \lambda_c)\big(M_{Y_i}(s) - 1\big)}{s\,M_{S_{\mathrm{n}}}(-\lambda_c)\,M'_{Y_i}(0)}\, , \label{eq_23}\\
    M_{P_i}(s) &= \frac{M_{S_{\mathrm{n}}}(s - \lambda_c)M_{Y_i}(s)}{M_{S_{\mathrm{n}}}(-\lambda_c)}\, , \label{eq_24}
\end{align}
where $M_{Y_i}(s)$ is the MGF of the inter-departure time for~$s_i$ from state $q_0$ back to $q_0$, obtained by replacing the parameters $\lambda_1$ and $\lambda_2$ in \eqref{eq_16}-\eqref{eq_20} with~$\lambda_i$ and $\lambda_c$, respectively. 

\begin{figure*}[t]
    \centering
  \subfloat[AAoI under Pareto distribution.\label{fig3a}]{%
       \includegraphics[width=0.655\columnwidth]{./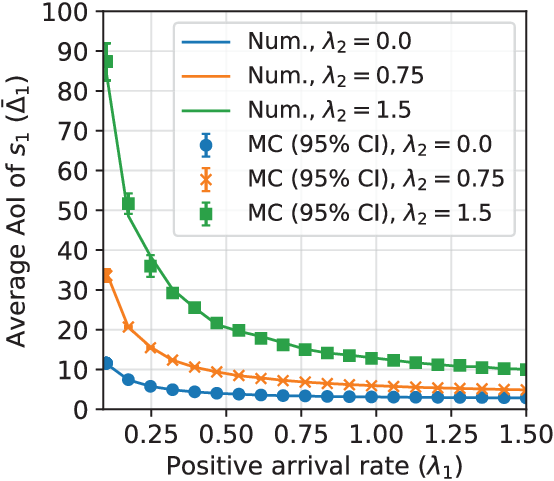}}
    ~
  \subfloat[AAoI under Erlang-2 distribution.\label{fig3b}]{%
        \includegraphics[width=0.655\columnwidth]{./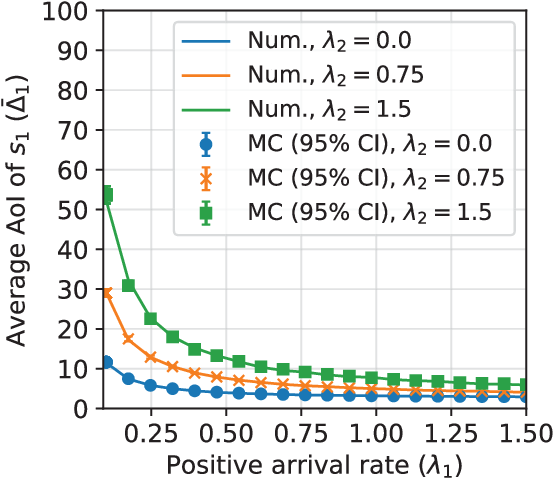}} 
    ~
  \subfloat[AAoI under Hyper-exponential distribution.\label{fig3c}]{%
        \includegraphics[width=0.655\columnwidth]{./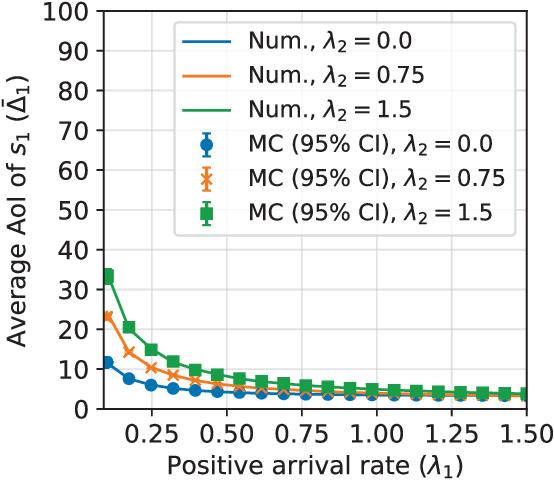}} 
    \vspace{0.2em}
  \caption{AAoI comparison in the two-source M/G/1/1 system with adversarial source~$s_2$ under different service distributions and negative arrival rates~$\lambda_2 \in \{0, 0.75, 1.5\}$: (a) Pareto $(\theta=1,\alpha=3)$; (b) Erlang-2 $(k=2,\mu=4/3)$; (c) Hyper-exponential $(p=0.5,\mu_1=1,\mu_2=0.5)$. Here, $\beta = 1.5$.}
  \label{fig3} 
\end{figure*}
\section{Performance Bounds of AAoI}
\label{sec_4}
\fontdimen2\font=0.65ex
Let $H_i$ denote the inter-arrival time between two consecutive updates from the benign source~$s_i$, $i \!\in\! \{1,2,\dots, N\}$, and let $H_c$ be the inter-arrival time of adversarial arrivals from the compromised source $s_c$ $(i \!\neq\! c)$. From the renewal reward~framework, the time average AoI for source~$s_i$ is given by~\cite{Soysal2021}:
\begin{align}
    \bar{\Delta}_i &= \mathbb{E}[T_{i}] + \frac{\mathbb{E}\big[(Y_i)^2\big]}{2\,\mathbb{E}[Y_i]}\, . \label{eq_25} 
\end{align}

Because the system has no buffer and is subject to preemption by negative arrivals, an update generated by source~$s_i$ may be dropped or delayed, which can only increase the~time between consecutive \emph{delivered} updates. Hence, $Y_i \ge H_i$,~which implies that $\mathbb{E}\big[Y_i\big] \ge \mathbb{E}\big[H_i\big]$ and $\mathbb{E}\big[(Y_i)^2\big] \ge \mathbb{E}\big[(H_i)^2\big]$.~As a result of this, the following immediate policy-independent \emph{lower bound} on AAoI holds:
\begin{equation}
    \bar{\Delta}_i \;\ge\; \mathbb{E}[T_i] + \frac{\mathbb{E}\big[(H_i)^2\big]}{2\,\mathbb{E}[H_i]}\, .
    \label{eq_26}
\end{equation}

To formulate the worst-case attack model, we consider the adversary~$s_c$ to be limited by: (i) the attack rate limit $(\lambda_c \!\le\! \lambda_{\max})$ and the service slowdown factor $(1 \!<\! \beta \!\leq\! \beta_{\max})$. These constraints reflect practical limitations that prevent the attacker from indefinitely attacking or excessively slowing the server. Let $\mathcal{L}(\lambda_{\max},\beta_{\max})$ be the set of all admissible attack policies satisfying these constraints. Lemma~\ref{lemma_2} postulates the worst-case AAoI \textit{upper bound}.
\begin{lemma}
\label{lemma_2}
For any attack policy $L \!\in\! \mathcal{L}(\lambda_{\max},\beta_{\max})$ and any benign source~$s_i$, where $i \!\in\! \{1,2, \ldots, N\}$ and $i \!\neq\! c$, the age process $A_i(t)$ is stochastically dominated (in the increasing–convex order) by that under the worst-case policy~$\mathcal{P}^\star$ that uses $\lambda_c = \lambda_{\max}$ and $\beta = \beta_{\max}$. Consequently, $\forall  i \neq c$,
\begin{equation}
    \bar{\Delta}_i \leq \bar{\Delta}_i^{\mathcal{P}^\star\!} = \mathbb{E} [T_i] + \frac{\mathbb{E} \big[(Y_i^{\mathcal{P}^{\star}})^2\big]}
           {2\,\mathbb{E} \big[Y_i^{\mathcal{P}^{\star}}\big]}\, ,
    \label{eq_27}
\end{equation}
where $Y_i^{\mathcal{P}^{\star}\!}$ includes the slow-service effect and its moments are obtained from the MGFs derived in Section~\ref{sec_3_2} by~substituting $\lambda_c \!=\! \lambda_{\max}$ and $M_{S_{\mathrm{s}}}(s) \!=\! M_{S_{\mathrm{n}}}(\beta_{\max}s)$.
\end{lemma}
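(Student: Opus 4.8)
The plan is to prove the dominance claim by a sample--path coupling between the system driven by an arbitrary admissible policy $L=(\lambda_c,\beta)\in\mathcal{L}(\lambda_{\max},\beta_{\max})$ and the one driven by the extremal policy $\mathcal{P}^\star=(\lambda_{\max},\beta_{\max})$, and then to read \eqref{eq_27} off from the renewal--reward identity \eqref{eq_25}. On a common probability space I would drive both systems with: the same Poisson streams of positive arrivals from every benign source; a single i.i.d.\ sequence $\{S_{\mathrm{n}}^{(k)}\}$ of base service requirements, shared by the two systems; one Poisson stream of adversarial epochs of rate $\lambda_{\max}$, from which $L$'s negative arrivals are obtained by independent $(\lambda_c/\lambda_{\max})$--thinning, so that every negative arrival seen by $L$ is also seen by $\mathcal{P}^\star$; and common uniform primitives for the exponential slow--state clocks. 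With this coupling, each preemption rescales the \emph{same} residual base requirement by $\beta\le\beta_{\max}$, i.e.\ by no more than under $\mathcal{P}^\star$.

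The heart of the argument is a forward induction over the merged, time--ordered stream of arrival and departure events, establishing that the age sample path of $\mathcal{P}^\star$ dominates that of $L$ in the increasing--convex order over any finite horizon. Two pathwise monotonicities drive the induction: (i) an extra negative arrival, or the use of the larger factor $\beta_{\max}$, can only preempt, restart or lengthen the service in progress, hence postpone the next departure; and (ii) by the no--buffer discipline a postponed departure can only cause more subsequent positive arrivals to be blocked, so the lag of $\mathcal{P}^\star$ behind $L$ propagates. The one event that escapes a pointwise comparison is the slow--state sojourn: its mean $1/\Lambda$ involves the aggregate rate $\Lambda$, which is larger under $\mathcal{P}^\star$, so this sojourn is stochastically shorter under $\mathcal{P}^\star$ and tends to restore normal service sooner; this competing effect is precisely why the comparison must be phrased in the increasing--convex (rather than the usual stochastic) order. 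I would control it by supplementing the coupling with the convex--order monotonicity of the regeneration--cycle length $Y_i$ in the pair $(\lambda_c,\beta)$, which can be checked directly on the closed forms: each of the sojourn--time MGFs \eqref{eq_16}--\eqref{eq_20}, and hence the geometric sum \eqref{eq_15}, is nondecreasing in $\lambda_c$ and in $\beta$ when evaluated at $s\ge 0$.

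Once the increasing--convex dominance $A_i^{L}(\cdot)\le_{\mathrm{icx}}A_i^{\mathcal{P}^\star}(\cdot)$ of the age processes is established, the conclusion is immediate: the long--run time average in \eqref{eq_1} is itself an increasing--convex (indeed increasing--linear) functional of the sample path, and the increasing--convex order is exactly the order defined through such functionals, so, taking expectations and letting the horizon grow by ergodicity, $\bar{\Delta}_i\le\bar{\Delta}_i^{\mathcal{P}^\star}$ for every $i\neq c$. It then remains only to identify $\bar{\Delta}_i^{\mathcal{P}^\star}$ with the right--hand side of \eqref{eq_27}: applying the renewal--reward identity \eqref{eq_25} to the $\mathcal{P}^\star$--system gives $\bar{\Delta}_i^{\mathcal{P}^\star}=\mathbb{E}[T_i]+\mathbb{E}[(Y_i^{\mathcal{P}^\star})^2]/(2\,\mathbb{E}[Y_i^{\mathcal{P}^\star}])$, where the moments of $Y_i^{\mathcal{P}^\star}$ follow from the MGF of Section~\ref{sec_3_2} after the substitutions $\lambda_c=\lambda_{\max}$ and $M_{S_{\mathrm{s}}}(s)=M_{S_{\mathrm{n}}}(\beta_{\max}s)$, exactly as stated.

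The step I expect to be the main obstacle is this forward induction. Because longer services under $\mathcal{P}^\star$ block more positive arrivals, the set of \emph{accepted} (non--dropped) $s_i$--updates is not the same in the two coupled systems, so one cannot match ``the $k$--th delivered update'' across systems and must instead carry a time--indexed invariant through every event type. The delicate point inside it is reconciling the heavier preemption load of $\mathcal{P}^\star$ with its shorter slow--state sojourns---and with the fact, visible in \eqref{eq_22}, that the conditional system time $\mathbb{E}[T_i]$ actually \emph{decreases} in $\lambda_c$---which is exactly what rules out a bare pathwise domination and forces the combination of the coupling with the convex--order monotonicity of $Y_i$.
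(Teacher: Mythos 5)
Your overall route is the same as the paper's: establish that the cycle quantities entering the renewal--reward identity \eqref{eq_25} are monotone in $(\lambda_c,\beta)$, conclude that the supremum over $\mathcal{L}(\lambda_{\max},\beta_{\max})$ sits at the corner $(\lambda_{\max},\beta_{\max})$, and read off \eqref{eq_27}. The paper asserts this monotonicity in two sentences; you try to substantiate it with a thinning/common-randomness coupling. To your credit, you correctly isolate the two competing effects that a bare monotonicity claim must confront: the conditional system time $\mathbb{E}[T_i]$ in \eqref{eq_22} \emph{decreases} in $\lambda_c$ (exponential tilting toward short services), and the slow-state sojourn, governed by $\Lambda$, becomes \emph{shorter} as $\lambda_c$ grows. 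The paper's own proof overlooks both (it claims both operations "increase $\mathbb{E}[T_i]$").

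However, the device you propose to close the gap does not work as stated. You claim that each of the sojourn-time MGFs \eqref{eq_16}--\eqref{eq_20}, hence the geometric sum \eqref{eq_15}, is nondecreasing in $\lambda_c$ at $s\ge 0$. This is false for $W_2$ and $W_3$: $W_2$ is the service time conditioned on completing before a negative arrival, and $W_3$ is the time to the first negative arrival conditioned on preemption; both are stochastically \emph{decreasing} in $\lambda_c$, so their MGFs at $s>0$ decrease. Any valid monotonicity argument for $Y_i$ must instead come from the growth of the preemption/restart \emph{count} (the geometric structure through $p_{\overline{\mathcal{D}}}=1-M_{S_{\mathrm{n}}}(-\lambda_c)$ and $p_{\overline{\mathcal{F}}}$) outweighing the shortening of the individual sojourns --- which is exactly the kind of trade-off a term-by-term comparison cannot resolve. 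Combined with the fact that your forward induction --- which you yourself flag as the main obstacle, since accepted updates cannot be matched across the two coupled systems --- is left as a plan, the increasing-convex dominance is not actually established. The final step (identifying $\bar{\Delta}_i^{\mathcal{P}^\star}$ via \eqref{eq_25} with the substitutions $\lambda_c=\lambda_{\max}$, $M_{S_{\mathrm{s}}}(s)=M_{S_{\mathrm{n}}}(\beta_{\max}s)$) is fine, but it rests on the unproven dominance.
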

\begin{proof}
Increasing $\lambda_c$ raises negative preemption and the frequency of slow-service state entries, while increasing~$\beta$ lengthens slow-state sojourns. Both operations make $Y_i$ larger in the increasing–convex order and increase $\mathbb{E}[T_i]$. Since~$\bar{\Delta}_i$ in \eqref{eq_25} is monotone in $\mathbb{E}[T_i]$, $\mathbb{E}[Y_i]$, and $\mathbb{E}\big[(Y_i)^2\big]$, the supremum over $\mathcal{L}(\lambda_{\max},\beta_{\max})$ is attained at $\lambda_c \!=\! \lambda_{\max}$ and $\beta \!=\! \beta_{\max}$, yielding the upper bound in \eqref{eq_27}. 
\end{proof}

Note that the lower bound in~\eqref{eq_26} depends only on the benign source’s arrival process via $H_i$, thus ignoring adversarial preemptions and service slowdowns, whereas the upper bound in~\eqref{eq_27} captures the worst admissible impact of~adversarial preemption and slowdowns through $(\lambda_{\max},\beta_{\max})$.

\begin{figure*}[t]
    \centering
  \subfloat[AAoI under Pareto distribution.\label{fig4a}]{%
       \includegraphics[width=0.662\columnwidth]{./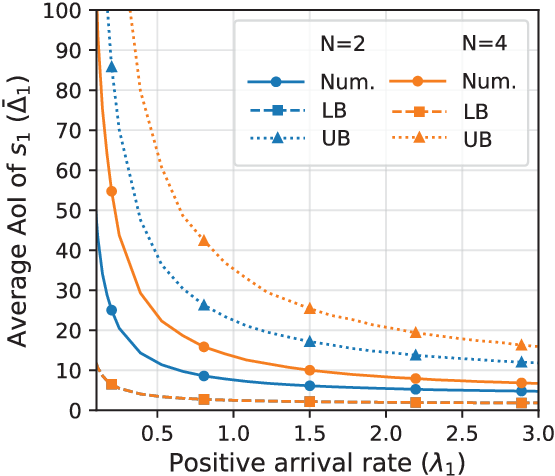}}
    ~
  \subfloat[AAoI under Erlang-2 distribution.\label{fig4b}]{%
        \includegraphics[width=0.662\columnwidth]{./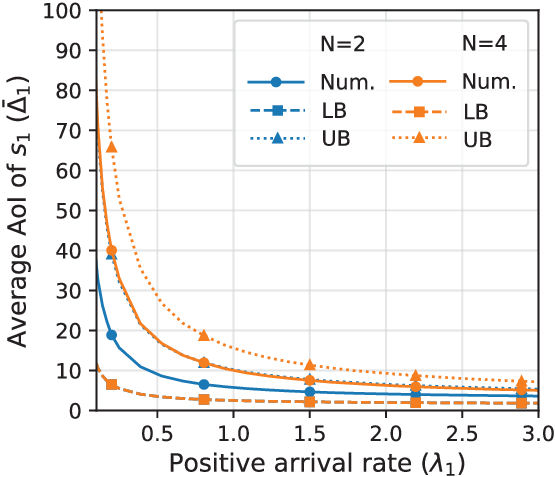}} 
    ~
  \subfloat[AAoI under Hyper-exponential distribution.\label{fig4c}]{%
        \includegraphics[width=0.662\columnwidth]{./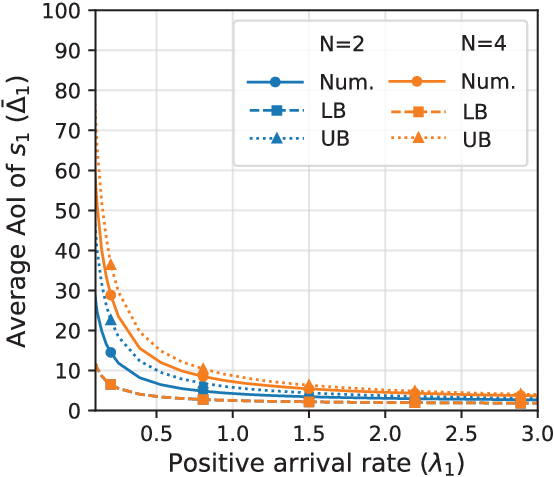}} 
    \vspace{0.2em}
  \caption{AAoI bound comparisons in the M/G/1/1 system with $N \in \{2,4\}$ sources and under different service distributions: (a) Pareto $(\theta=1,\alpha=3)$; (b) Erlang-2 $(k=2,\mu=4/3)$; (c) Hyper-exponential $(p=0.5,\mu_1=1,\mu_2=0.5)$. For Num., $\lambda_c=1$ and $\beta = 1.5$.}
  \label{fig4} 
\end{figure*}
\section{Numerical Results and Discussions}
\label{sec_5}
\fontdimen2\font=0.65ex
In this section, we evaluate the achievable AAoI of the proposed adversarial model under Pareto (Par.), Erlang-2~(Erl.), and Hyper-exponential order-2 (Hyp.) service time distributions~\cite{Moltafet2022, Kumar2025}. Both numerical (Num.) evaluations and Monte Carlo~(MC) simulations (averaged over 12 runs) are conducted, and the results are~benchmarked against the baseline no-adversary case, i.e., $\lambda_c \!=\! 0$. For the Par. distribution, we use scale parameter~$\theta \!=\! 1$ and shape $\alpha \!=\! 3$. For the Erl.~distribution, we set shape $k \!=\! 2$ with rate $\mu \!=\! 4/3$, and for~the Hyp. distribution,~we adopt rates ${(\mu_1,\mu_2) \!=\! (1, 0.5)}$~(mean normalized to $1/\mu$) and balanced mixture probability $p \!=\! 0.5$. These parameters are chosen so that the mean service time is the same across all three distributions, i.e., $\Exp [S]=1.5$. Unless otherwise stated, we set $\lambda_{\max} \!=\! \beta_{\max} \!=\! 2$.

\figurename{~\ref{fig3}} depicts the AAoI of the benign source~$s_1$ under~three service time distributions in the two-source setting. The results show that the AAoI is highest under~the~Par.~distribution with $\Exp[S] \!=\! \alpha\theta/(\alpha-1)$, moderate under the~Erl. distribution with $\Exp[S] \!=\! k/\mu$, and lowest under the Hyp.~distribution with $\Exp[S] \!=\! p/\mu_1+(1-p)/\mu_2$.~This behavior stems from~the distinct tail characteristics of the service times. The Par.~distribution shown in \figurename{~\ref{fig3a}} exhibits a heavy tail with a decreasing hazard rate. This implies that once a long service is encountered, it is likely to remain long. Such~prolonged service times are highly susceptible to adversarial preemptions, which extend the inter-delivery intervals of the positive updates and result in higher AoI values. For Erl. service (\figurename{~\ref{fig3b}}), the distribution is more concentrated around its mean with limited variability, so updates are less likely to complete unusually early and thus, leading to higher AoI~relative to the Hyp. distribution shown in \figurename{~\ref{fig3c}}. As for the Hyp. case, a non-negligible probability mass is concentrated at very short service durations, which substantially increases the likelihood that an update completes service before a negative arrival interrupts it, thereby reducing AoI. Hence, even under the same $\mathbb{E}[S]$, the service distribution significantly influences system timeliness, with lighter-tailed and more variable distributions favoring lower AAoI in adversarial system.

\figurename{~\ref{fig4}} compares the AAoI of $s_1$ versus~$\lambda_1$ under the bounds derived in \eqref{eq_26} and \eqref{eq_27}. The \emph{lower bound} (LB), which is computed with no adversary and no slowdown ($\lambda_{c} \!=\! 0$, $\beta \!=\! 1$), decays rapidly in $\lambda_1$ because the inter-arrival shrinks~as updates are generated more frequently. Hence, \eqref{eq_26} reduces to $\mathbb{E}[S]+1/\lambda_1$. The \emph{upper bound} (UB), evaluated at the worst case ($\lambda_{c} \!=\! \lambda_{\max}$, $\beta \!=\! \beta_{\max}$), is the highest curve which~reflects more frequent negative preemptions and slower service thus, inflating the renewal second-moment term in~\eqref{eq_27}. The Num. curve ($\lambda_{\text{c}} \!=\! 1$, $\beta \!=\! 1.5$) lies between~LB and UB across all~$\lambda_1$ values, validating the bounding construction.~Increasing the number of sources from $N \!=\! 2$ to $N \!=\! 4$ shifts all curves upward. Evidently, extra contention raises the busy probability and the chance that status updates from $s_1$ are dropped or exposed to adversarial preemption, so both $T_1$ and $Y_1$ tend to grow.~The heavy-tailed Par. distribution in~\figurename{~\ref{fig4a}} yields the largest AAoI and widest LB-UB gap because occasional very long services are highly vulnerable to preemption/slow epochs.~The more concentrated Erl. distribution in~\figurename{~\ref{fig4b}} moderates this effect, whereas~the Hyp. distribution in~\figurename{~\ref{fig4c}} provides the smallest AAoI since a non-trivial number of very short services often completes before a negative arrival, lowering $Y_1$ and in turn, the average age.

\section{Conclusion}
\label{sec_6}
In this paper, we investigated the impact of adversarial attacks on the AAoI in multi-source M/G/1/1 status updating systems. Closed-form MGFs for AoI and PAoI were~derived to establish performance bounds under adversarial constraints on maximum attack rate and service slowdown. The analysis showed that the worst-case adversarial strategy yields a tractable characterization of the system’s performance limits, while numerical and Monte Carlo evaluations confirmed the theoretical results and highlighted the tradeoff between adversarial intensity and service efficiency across different service distributions. Future directions include extending the model to buffer-aided queues and exploring joint optimization of energy expenditure and AoI in adversarial IoT networks.

\ifCLASSOPTIONcaptionsoff
  \newpage
\fi



%
\bibliographystyle{IEEEtran}
\bibliography{IEEEabrv, myref}

%

\begin{IEEEbiography}{Michael Shell}
Biography text here.
\end{IEEEbiography}

\begin{IEEEbiographynophoto}{John Doe}
Biography text here.
\end{IEEEbiographynophoto}


\begin{IEEEbiographynophoto}{Jane Doe}
Biography text here.
\end{IEEEbiographynophoto}




\end{document}